\newcommand\hit{\ensuremath{\cal H}}
\newcommand\zz{\ensuremath{\mathbb{Z}}}
\newcommand\nn{\ensuremath{\mathbb{N}}}
\newcommand\qq{\ensuremath{\mathbb{Q}}}
\newcommand{\minuszero}{\setminus \{0\}}
\begin{document}

\mainmatter
\title{A Hitting Set Construction, with Applications to Arithmetic Circuit Lower Bounds}

\author{Pascal Koiran}

\institute{ \today\\
LIP\thanks{UMR 5668 ENS Lyon, CNRS, UCBL, INRIA.}, \'Ecole Normale Sup\'erieure de Lyon, Universit\'e de Lyon\\
Department of Computer Science, University of Toronto\thanks{A part of this work was done during a visit to the Fields Institute.}
{\tt Pascal.Koiran@ens-lyon.fr} 
}

\titlerunning{Lower bounds from hitting sets}
\maketitle

\begin{abstract}
A polynomial identity testing algorithm must determine whether a given input polynomial is identically equal to 0.
We give a deterministic black-box 
identity testing algorithm for univariate polynomials of 
the form $\sum_{j=0}^t c_j X^{\alpha_j} (a + b X)^{\beta_j}$.
From our algorithm we derive an exponential lower bound for representations of
polynomials such as $\prod_{i=1}^{2^n} (X^i-1)$ under this form. 

It has been conjectured that these polynomials are hard to compute by general arithmetic circuits.
Our result shows that the ``hardness from derandomization'' approach to lower bounds 
is feasible for a restricted class of arithmetic circuits.
The proof is based on techniques from algebraic number theory, and more precisely on properties of the height function of algebraic numbers.
\end{abstract}

\section{Introduction}

The large body of work on hardness versus randomness tradeoffs shows that the 
two tasks of proving lower bounds and derandomizing algorithms are roughly 
equivalent. 
This equivalence holds both in the boolean and arithmetic world.
We focus here on the arithmetic world~\cite{KI04}.
The equivalence between lower bounds and derandomization 
suggests a new approach to lower bounds (see e.g.~\cite{KI04,Agra05}): 
let us derandomize algorithms first, and much-coveted lower bounds will follow.
This ``hardness from derandomization'' approach is very appealing, 
but apparently has not yet led to many new
lower bound results. There have been some recent advances in derandomization,
however, especially for identity testing of small-depth arithmetic circuits, e.g.~\cite{SS09,KS08} and for the more difficult problem of black-box circuit 
reconstruction~\cite{KS09}. Also techniques have been developed for obtaining
simultaneously lower bounds and identity tests~\cite{Saxena08}, thereby 
reinforcing the intuition that these two problems are intimately connected.

In this paper we use the ``hardness from derandomization'' approach to
obtain lower bounds for a certain class of arithmetic circuits.
More precisely, we prove lower bounds for representations of univariate 
polynomials under the form
\begin{equation} \label{expression}
\sum_{j=0}^t c_j X^{\alpha_j} (a + b X)^{\beta_j},
\end{equation}
where the $c_j$, $a$ and $b$ are rational numbers.
Polynomials of this form were first considered in~\cite{KaKoi05} due to their role in the factorization of sparse bivariate polynomials. Indeed, such an expression vanishes identically if and only if $Y-a-bX$ is a linear factor of the bivariate
polynomial $\sum_{j=0}^t c_j X^{\alpha_j} Y^{\beta_j}$.

Obviously, any univariate polynomial 
can be expressed under form~(\ref{expression}) by expanding it as a sum
of monomials (the resulting $\beta_j$ are all 0). 
Representation~(\ref{expression}) can potentially
be much more compact than the ``sum of monomials'' representation, however,
due to the presence of the possibly large exponents 
$\alpha_j$ and $\beta_j$ (note that $X^{\alpha_j}$ can be computed in 
about $\log \alpha_j$ multiplications by repeated squaring; the same
trick applies of course to $(a + b X)^{\beta_j}$).
The presence of possibly large exponents makes lower bounds and deterministic 
identity testing nontrivial.

\subsection{Lower Bound Statement} \label{statement}

A simple version of our lower bound result is as follows.
\begin{theorem} \label{simple_lb}
Consider a family of polynomials $(P_n)$ of the form
\begin{equation} \label{hardpoly}
P_n=\prod_{i=1}^{N_n} (X^i-1).
\end{equation}
Assume that $P_n$ can be expressed under form~(\ref{expression}) 
with $t$ polynomially bounded in $n$ 
and the 
bit sizes of the $c_j$, $\alpha_j$ and $\beta_j$ 
polynomially bounded in $n$.
Then $N_n$ must be polynomially bounded in $n$ as well.
\end{theorem}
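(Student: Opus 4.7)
The plan follows the ``hardness from derandomization'' template. First I would construct a hitting set $H_n \subset \overline{\mathbb{Q}}$, of cardinality polynomial in $n$, such that every nonzero univariate polynomial of the form~(\ref{expression}) with $t$ and the bit sizes of $c_j,\alpha_j,\beta_j$ polynomially bounded in $n$ has at least one non-root in $H_n$. The natural arithmetic invariant for the construction, as announced in the abstract, is the Weil height: for a carefully chosen $H_n$ (most naturally a set of roots of unity or of low-height algebraic numbers), one can derive an a priori bound on the height of $\sum_j c_j\zeta^{\alpha_j}(a+b\zeta)^{\beta_j}$ and use it to preclude identical vanishing on all of $H_n$.

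Granting such a hitting set, the theorem follows in one line. The hitting set property applied to $P_n$ supplies some $\zeta\in H_n$ with $P_n(\zeta)=\prod_{i=1}^{N_n}(\zeta^i-1)\neq 0$, so $\zeta^i\neq 1$ for every $i\in\{1,\dots,N_n\}$. To convert this into an upper bound on $N_n$ I would use the structural information about $\zeta$ coming from the construction: most naturally, $\zeta$ is a root of unity of some order $k$ polynomial in $n$, in which case $\zeta^k=1$ forces $N_n<k$. Should $H_n$ instead contain non-roots of unity, an alternative route is a direct height comparison --- the height of $\prod_{i=1}^{N_n}(\zeta^i-1)$ grows at least quadratically in $N_n$ (for $\zeta$ of nonzero height), whereas the compact representation forces a polynomial-in-$n$ bound on $h(P_n(\zeta))$.

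The main obstacle is clearly the hitting set construction itself. A particularly delicate aspect is that the hypothesis does \emph{not} bound the bit sizes of $a$ and $b$, so $H_n$ must work uniformly over all $a,b\in\mathbb{Q}$. This precludes naive strategies like evaluation at $X=2$ (which would be defeated by $a+2b$ being zero or very small in absolute value), and is presumably why one turns to the height function --- scale-insensitive in ways the usual absolute value is not, and well-behaved under the exponentiation $(a+bX)^{\beta_j}$ --- rather than ordinary magnitude estimates.
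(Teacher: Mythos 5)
Your reduction from a hitting set to the theorem is essentially the paper's: take $\hit$ to consist of the $p$-th roots of unity for the first polynomially many primes $p\geq 5$, so that each such $p$ is polynomial in $n$; if $P_n$ were expressible in form~(\ref{expression}) with polynomially bounded parameters, the hitting set property would yield a $p$-th root of unity $\zeta$ with $P_n(\zeta)\neq 0$, and since $\zeta^p=1$ this forces $N_n<p$. That part of the argument is sound and matches Theorem~\ref{main} and the remark after it.

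But you (rightly) flag the hitting set construction as the main obstacle, and the one concrete mechanism you propose for it --- derive an a priori \emph{upper} bound on the height of $\sum_j c_j\zeta^{\alpha_j}(a+b\zeta)^{\beta_j}$ and ``use it to preclude identical vanishing'' --- does not work and is not what the paper does. An upper bound on $H(f(\zeta))$ is vacuous against $f(\zeta)=0$, since $0$ has height $1$; and a crude degree argument fails because the $\alpha_j,\beta_j$ can be exponentially large. The engine of the paper's construction is a height \emph{lower} bound: $H(a+b\theta)\geq C>1$ for every root of unity $\theta$ of order $>6$ (Proposition~\ref{hlb}, via the Amoroso--Zannier theorem on abelian extensions). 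This lower bound powers a Lenstra-style \emph{gap theorem} (Theorem~\ref{gap_th}): if the sorted exponents $\beta_j$ exhibit a gap $\beta_{l+1}-\beta_l>\delta\approx\log\bigl(t(t+1)H(c)\bigr)/\log C$, then any $\theta\in\hit$ that is a root of $f$ is automatically a common root of the low and high halves of $f$ split at position~$l$. Breaking $f$ at all such gaps reduces to summands whose $\beta_j$ span at most $t\delta$, i.e.\ to polynomials with polynomially many monomials after expansion, and then the standard roots-of-unity hitting set for sparse polynomials (Lemma~\ref{sparse_hit}) closes the argument. This two-stage mechanism --- a gap theorem forcing effective sparsity, followed by sparse PIT at roots of unity --- is the substance of the proof, and it is absent from your sketch; the piece you do spell out would not substitute for it.
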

We define the bit size of $c_j$ as the sum of the bit sizes of its
numerator and denominator.
Note that there is no restriction on the size of the coefficients $a$ and $b$ 
in this theorem (they may grow arbitrarily fast as a function of $n$).
Here we have expressed our result as a function of a single parameter~$n$
for the sake of clarity. We give in Theorem~\ref{main} a more precise (and slightly more general) lower 
bound where the dependency on each parameter is worked out carefully.
In particular, we work with the projective height $H(c)$ 
of the tuple $c=(c_j)$.
 This is a more appropriate notion of ``size'' of $c$ than the naive
bit size used in Theorem~\ref{simple_lb}.
The projective height is defined in Section~\ref{projective}.

The ``obvious'' arguments such as degree comparison 
between~(\ref{expression}) and~(\ref{hardpoly}) only show that $N_n$ must
be {\em exponentially} bounded in $n$.
Theorem~\ref{simple_lb} should therefore 
be viewed as an exponential lower bound.
One can also see the exponential nature of our lower bound by considering the polynomials 
$\prod_{i=1}^{2^n} (X^i-1)$: it follows from Theorem~\ref{main} that for some constant $\epsilon>0$, these polynomials cannot be expressed under form~(\ref{expression}) if~$t$ and the bit sizes of the $c_j$, $\alpha_j$ and $\beta_j$ are bounded by $2^{\epsilon n}$.

We note that the polynomials $P_n$ were suggested
by Agrawal as good candidates for proving lower bounds.
As observed by Agrawal~\cite{Agratalk}, 
if it could be shown that $P_n$ is hard to compute 
by general arithmetic circuits, it would follow that the permanent is hard to 
compute by arithmetic circuits. This also follows from a general result 
(Theorem~5 of~\cite{KoiPe07c}, see also~\cite{burgisser2007})
which roughly speaking shows the following: if the permanent has polynomial
size arithmetic circuits then exponential-size products of easy-to-compute 
polynomials are themselves easy to compute.

Note also that there is a formal similarity between~(\ref{hardpoly}) and the
well-known Pochhammer-Wilkinson polynomial $\prod_{i=1}^n (X-i)$ 
where roots of unity are replaced by integers.  The Pochhammer-Wilkinson
polynomial is widely conjectured to be hard to compute~\cite{burgisser2001,burgisser2007,Lipton94,ShubSmale}.
As explained in Section~\ref{remarks}, it is possible to obtain a good  lower bound for representations of this polynomial under form~(\ref{expression}).

\subsection{Main Ideas and Connections to Previous Work} \label{mainideas}

Our lower bound is based on the construction of hitting sets for polynomials
of the form~(\ref{expression}).
Recall that a hitting set $\hit$ for a set $\cal F$ of polynomials
is a (finite) set of points such that there exists for any non-identically 
zero polynomial $f \in \cal F$ at least one point $a \in \hit$ such that 
$f(a) \neq 0$. Hitting sets are sometimes called ``correct test sequences''~\cite{HS82}.
By a natural abuse of notation, 
we will sometimes say that $\hit$ is a hitting set for a polynomial $f$ 
if it a hitting set for the singleton $\{f\}$.

The existence of polynomial size hitting sets for general arithmetic circuits
follows from standard probabilistic arguments. 
A much more difficult problem is to give explicit (deterministic) constructions
of small hitting sets. It is easy to see that this problem is equivalent
to black-box deterministic identity testing: any hitting set for $\hit$
yields an obvious black-box identity testing algorithm 
(declare that $f \equiv 0$ iff $f$ evaluates to 0 on all the points of $\hit$);
conversely, assuming that $\cal F$ contains the identically zero  polynomial, the set of points queried by a black box algorithm on the input $f \equiv 0$
must be a hitting set for $\cal F$.

There is a general connection between lower bounds and derandomization
of polynomial identity testing~\cite{KI04}. This connection is especially
apparent in the case of black-box derandomization.
Namely, let $\hit$ be a hitting set for $\cal F$. The polynomial 
$P=\prod_{a \in \hit} (X-a)$ cannot belong to $\cal F$ since it is nonzero
and vanishes on $\hit$. The same remark applies to all nonzero multiples of $P$. If $\cal F$ is viewed as some
kind of ``complexity class'', we have therefore obtained a lower bound against $\cal F$ by exhibiting a polynomial $P$ which does not belong to $\cal F$.
This connection between hitting sets and arithmetic lower bounds has been known for at least 30 years~\cite{HS82}, but has led
to suprisingly few lower bound results.\footnote{As already observed in~\cite{HS82}, hitting sets may be difficult to construct precisely because they yield lower bounds.}
To the best of our knowledge, only one lower bound of this type is known: Agrawal~\cite[Corollary~65]{Agra05} has shown that certain multilinear polynomials cannot be computed by circuits with unbounded fanin addition gates of size $n^{2-\epsilon}$ and depth $(2-\epsilon)\log n$. The lower bound applies to polynomials with coefficients computable in PSPACE 
(this complexity class was independently defined in~\cite{KoiPe07a}, where it is called VPSPACE; further results on this class 
and other space-bounded classes in Valiant's model can be found in~\cite{KoiPe07b,MahaRao09,Poizat08}).

We have pointed out in Section~\ref{statement} that a lower bound for $P_n$
against general arithmetic circuits would imply 
a lower bound for the permanent. 
For the same reason (Theorem~5 of~\cite{KoiPe07c}), 
a hitting set construction against 
general arithmetic circuits would imply a lower bound for the permanent.

Our hitting set construction builds on work from~\cite{KaKoi05,KaKoi06}.
In~\cite{KaKoi05} we designed a deterministic identity testing algorithm
for expressions of the form~(\ref{expression}) as an intermediate 
step toward an algorithm for the factorization of ``supersparse'' bivariate
polynomials. Our identity testing algorithm was not black-box. Rather, 
it was based on a structure theorem (a so-called ``gap theorem'') 
which makes it possible to recognize easily identically zero expressions.
Here we build on this work to construct hitting sets. These sets turn out
to be made of roots of unity, explaining why we obtain a lower bound for 
polynomials of the form~(\ref{hardpoly}).

In terms of the class of arithmetic circuits studied, the work which seems 
closest to ours is by Saxena~\cite{Saxena08}. 
He gives lower bounds and identity testing algorithms 
for ``diagonal circuits'', 
i.e., sums of powers of (multivariate) 
linear functions, and more generally 
for sums of products of a small number of powers of linear functions.
Our circuits fall in this category since they compute sums of products of 
two powers of linear functions.
Our results and methods are quite different, however. 
He uses non-black-box methods, whereas we use black-box methods.
Moreover, his lower bounds break down for powers of high degree 
whereas we can handle high degree powers (indeed, for univariate polynomials
the only challenge is to prove lower bounds for polynomials of high degree
since any low degree polynomial can be represented efficiently as a sum of monomials, assuming that field constants are given for free).

\subsection{Organization of the paper}

As in~\cite{KaKoi05,KaKoi06} we use number-theoretic techniques and in particular properties of the height of algebraic numbers.
Some background on the height function
is provided in Section~\ref{background}.
Section~\ref{height_section} is technical: we obtain a height lower bound
which we use in Section~\ref{hitting} to construct our hitting sets.
From there, the lower bound theorem of Section~\ref{lb_section} follows
easily from the approach outlined in Section~\ref{mainideas}.
Finally, we suggest some possible extensions 
of our results in Section~\ref{remarks}.

\section{Number Theory Background} \label{background}

In this section we provide some background on the height function, 
first for algebraic numbers and then more generally 
for points in projective space.

\subsection{Heights of Algebraic Numbers} \label{height}

For any prime number~$p$, %
the $p$-adic absolute value on $\qq$ is characterized by
the following properties:
$|p|_p=1/p$, and $|q|_p=1$ if $q$ is a prime number different from
$p$.
For any $x \in \qq\minuszero$, %
$|x|_p$ can be computed as follows:
write $x=p^{\alpha}y$
where $p$ is relatively prime to the numerator and denominator of $y$,
and $\alpha \in \zz$. Then $|x|_p = 1/p^{\alpha}$
(and of course $|0|_p = 0$).
We denote by $M_{\qq}$ the union of the set of $p$-adic absolute values
and of the usual (archimedean) absolute value on $\qq$.

Let $d, e \in \zz$ %
be two non-zero relatively prime integers.
By definition, %
the height of the rational number $d/e$ is $\max(|d|$, $|e|)$. %
There is an equivalent definition in terms of absolute values:
for $x \in \qq$,
$H(x) = \prod_{\nu \in M_{\qq}} \max(1,|x|_{\nu})$.
Note in particular that $H(0)=1$.

More generally,
let $K$ be a number field (an extension of $\qq$ of finite degree).
The set $M_K$ of {\em normalized absolute values} is the set of
absolute values on $K$ which extend an absolute value of $M_{\qq}$.
For $\nu \in M_K$, we write $\nu | \infty$ if $\nu$ extends the usual
absolute value, and $\nu | p$ if $\nu$ extends the $p$-adic
absolute value.
One defines a ``relative
height'' $H_K$ on $K$ by the formula
\begin{equation} \label{heightdef}
H_K(x) = \prod_{\nu \in M_K} \max(1,|x|_{\nu})^{d_{\nu}}.
\end{equation}
Here $d_{\nu}$ is the so-called ``local degree''.
For every $p$
(either prime or infinite), $\sum_{\nu | p} d_{\nu} = [K:\qq]$.
The absolute height $H(x)$ of $x$ is $H_K(x)^{1/n}$, where $n=[K:\qq]$.
It is independent of the choice of $K$.
The above material is standard in algebraic number theory.
More details 
can be found for instance in \cite{Lang} or \cite{Wal00}.
We will also need a special case of a result due Amoroso and Zannier and already used in~\cite{KaKoi06}.
\begin{lemma} \label{AmoZannier}
Let $\theta$ be a root of unity and $a,b \in \qq$ such that $\alpha=a+b\theta$
is not a root of unity. If $\alpha \neq 0$ we have $H(\alpha) \geq C$ 
where $C>1$ is an absolute constant.
\end{lemma}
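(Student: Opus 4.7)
The plan is to invoke directly the Amoroso--Zannier theorem on heights of algebraic numbers lying in the maximal abelian extension of $\qq$. That result says: if $\alpha$ is a nonzero algebraic number lying in $\qq^{\mathrm{ab}}$ (the compositum of all finite abelian extensions of $\qq$, equivalently, by Kronecker--Weber, the union of all cyclotomic fields) and $\alpha$ is not a root of unity, then $H(\alpha) \geq C_0$ for an absolute constant $C_0 > 1$. An explicit admissible value, due to Amoroso--Dvornicich, is $C_0 = (5/2)^{1/12}$; the Amoroso--Zannier refinement enlarges the setting but any such version is enough here.

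With this in hand, the lemma reduces to a one-line verification of membership. First I would note that since $\theta$ is a root of unity, it lies in some cyclotomic field $\qq(\zeta_m)$, which is a finite abelian extension of $\qq$ and hence contained in $\qq^{\mathrm{ab}}$. Then $\alpha = a + b\theta$, being a $\qq$-linear combination of $1$ and $\theta$, lies in $\qq(\theta) \subseteq \qq^{\mathrm{ab}}$. The hypotheses of the lemma give exactly the two further conditions required by the theorem, namely $\alpha \neq 0$ and $\alpha$ not a root of unity. Applying the bound yields $H(\alpha) \geq C_0 > 1$, which is the claim.

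The only real ``obstacle'' is that the proof is not self-contained: its entire content is the invocation of this deep height-lower-bound theorem. It is worth emphasising, however, why the naive approach cannot succeed. The constant $C$ in the lemma must be independent of $a$, $b$, and of the order of $\theta$; any attempt to bound $H(a+b\theta)$ using only the degree of $\qq(\theta)$ or a Liouville-type inequality would produce a constant that degrades with the order of $\theta$ and would therefore be useless for the hitting set construction of Section~\ref{hitting}. It is precisely the abelian nature of $\qq(\theta)$ that has to be exploited, and that is exactly what the Amoroso--Zannier theorem does.
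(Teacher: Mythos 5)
Your proposal is correct and takes exactly the same route as the paper: both invoke the Amoroso--Zannier height lower bound for non-root-of-unity elements of abelian extensions of $\qq$, after observing that $\alpha = a + b\theta$ lies in the cyclotomic (hence abelian) field $\qq(\theta)$. The extra remark about the Amoroso--Dvornicich explicit constant is a nice addition but does not change the argument.
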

\begin{proof}
This follows from Theorem~1.1 of~\cite{AmZan00}
since the cyclotomic extension $\qq(\theta)$ is Abelian over $\qq$ 
(see for instance~\cite{vdWae40}, Section 8.4).$\Box$
\end{proof}

\subsection{Projective Height} \label{projective}

One can define a notion of (relative) height for
a point $c = (c_0,\ldots, c_t)$ in $K^{t+1}$ by the formula
$$H_K (c)=\prod_{\nu \in M_{K}} |c|_{\nu}^{d_{\nu}},$$
where $|c|_{\nu} = \max_{0 \leq j \leq t} |c_j|_{\nu}$.
This is the classical notion of height for a point in projective space
(\cite{HiSi00}, section~B.2).
 As a projective notion, $H_K(c)$ should be invariant by scalar multiplication.
Indeed, for $\lambda \in K\minuszero$
we have $H_K(\lambda c)=H_K(c)$.
This follows from the product formula:
$$\prod_{\nu \in M_K} |\lambda|_{\nu}^{d_{\nu}} = 1$$
for any $\nu \in K\minuszero$.
Note also that the (relative) height of an algebraic number $x \in K$ is equal to the projective height of the point $(1,x) \in K^2$.
As in the previous section, we can define an absolute height by the formula $H(c)=H_K(c)^{1/n}$ where $n=[K:\qq]$ and $K$ is 
chosen so that $c \in K^{t+1}$.

In our main lower bound theorem (Theorem~\ref{main}) we measure the size of the rational tuple $c=(c_j)$ in~(\ref{expression}) by its projective height instead of the naive bit size used in Theorem~\ref{simple_lb}. 
To compute the height of a rational tuple, we first note that 
$H(c)= \max_j |c_j|$ if the $c_j$ are relatively prime integers.
The general case $c_j \in \qq$ is therefore quite
easy: reduce to the same denominator to obtain integer coefficients,
divide by their gcd and take the maximum of the
absolute values of the resulting integers
(so in particular $H(c) \in \nn$ %
for any $c$ in~$\qq^{t+1}$).

\section{A Height Lower Bound} \label{height_section}

The goal of this section is to establish the following lower bound.
\begin{proposition} \label{hlb}
Let $(a,b)$ be a pair of rational numbers different from  the five ``excluded pairs'' $(0,0)$, $(\pm 1,0)$ 
and $(0,\pm 1)$.

There is a universal constant 
$C > 1$ such that the inequality 
\begin{equation} \label{hlb_equation}
H(a+b\theta) \geq C
\end{equation}
 holds for any root of unity $\theta$ which is not a 6th root of unity.
\end{proposition}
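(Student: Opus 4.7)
My plan is to apply Lemma~\ref{AmoZannier}: that lemma already yields $H(a+b\theta) \geq C$ provided $\alpha := a + b\theta$ is nonzero and is not itself a root of unity. The whole task is therefore to check that the five excluded pairs $(a,b)$ together with the exclusion of 6th roots of unity for~$\theta$ are precisely what is needed to force these two properties on~$\alpha$.

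First I would rule out $\alpha = 0$. If $b = 0$, non-exclusion forces $a \neq 0$, hence $\alpha = a \neq 0$. If $b \neq 0$, then $\alpha = 0$ would give $\theta = -a/b \in \qq$, hence $\theta \in \{1,-1\}$, a 6th root of unity, which is excluded.

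The main step is to show $\alpha$ is not a root of unity. Assume for contradiction that it is; then $\alpha$ is an algebraic integer of the cyclotomic field $\qq(\theta)$. Let $n$ denote the order of $\theta$. Our hypotheses exclude $n \in \{1,2,3,6\}$, so in particular $n \geq 4$ and $\phi(n) \geq 2$. Since $\{1,\theta,\ldots,\theta^{\phi(n)-1}\}$ is a $\zz$-basis of the ring of integers $\zz[\theta]$, uniqueness of the coordinates of $\alpha$ in this basis forces $a,b \in \zz$. The excluded pairs now dispose of the cases $a = 0$ and $b = 0$, so $a,b$ are nonzero integers. Embedding $\theta$ as $e^{2\pi i m/n} \in \mathbb{C}$ with $\gcd(m,n) = 1$, the relation $|\alpha| = 1$ reads $a^2 + b^2 + 2ab\cos(2\pi m/n) = 1$, so $\cos(2\pi m/n)$ is rational. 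By Niven's theorem on rational cosines of rational multiples of $\pi$, this happens only when $n \in \{1,2,3,4,6\}$. Combined with $n \notin \{1,2,3,6\}$ only $n = 4$ survives, and then $\cos(\pi/2) = 0$ gives $a^2 + b^2 = 1$, which has no integer solution with $ab \neq 0$.

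The one point that requires care is the bookkeeping at small $n$: each excluded pair $(a,b)$ and each excluded $\theta$ corresponds to a concrete way in which $\alpha$ could degenerate into $0$, $\pm 1$, or a primitive 6th root of unity (for example $1 + \omega = e^{i\pi/3}$ when $\omega$ is a primitive cube root of unity, which is exactly why $n = 3$ has to be excluded), and the stated hypotheses are tight enough to rule out all of these degeneracies, after which Lemma~\ref{AmoZannier} closes the argument.
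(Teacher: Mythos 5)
Your proof is correct, but it reaches the key reduction $n\in\{1,2,3,4,6\}$ by a genuinely different route. The paper argues geometrically: if $\theta$ and all its conjugates $\theta'$ made $a+b\theta'$ a root of unity, they would all lie at the intersection of the unit circle and the circle $|a+bz|=1$; the five exclusions ensure these circles are distinct, so there are at most two intersection points, hence $\phi(n)\le 2$. You instead use that a root of unity is an algebraic integer and that $\zz[\theta]$ is the full ring of integers of $\qq(\theta)$ with $\zz$-basis $\{1,\theta,\ldots,\theta^{\phi(n)-1}\}$, so the coordinates $a,b$ must be integers; Niven's theorem then restricts $n$. This is a clean alternative, but note that it relies on the nontrivial theorem that $\mathcal{O}_{\qq(\zeta_n)}=\zz[\zeta_n]$ for arbitrary $n$, whereas the paper's argument needs only the irreducibility of the cyclotomic polynomial (i.e., transitivity of the Galois action on the primitive $n$-th roots).

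One small imprecision worth tightening: ``The excluded pairs now dispose of the cases $a=0$ and $b=0$'' is not literally true on its own, since $(0,2)$, say, is not an excluded pair. What you actually need is: once $a,b\in\zz$ and $\alpha$ is a root of unity, $a=0$ forces $|b|=|b\theta|=1$, so $b=\pm1$, and $(0,\pm1)$ is excluded; symmetrically for $b=0$. This is implicitly what you intend, but the root-of-unity hypothesis (via $|\alpha|=1$, or the norm condition) has to be invoked at this point, not just the list of excluded pairs.
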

The inequality $H(a+b\theta) \geq C$ implies in particular that $a+b\theta$
is not a root of unity, since roots of unity are of height~1.

The main tool in the proof of Proposition~\ref{hlb} is the
 height lower bound of Lemma~\ref{AmoZannier}.
In light of this lemma, to complete the proof of Proposition~\ref{hlb} 
we just  need to understand when $a+b\theta$ can be a root of unity.
\begin{lemma}
Let $\theta$ be a root of unity and $(a,b)$ a pair of rational numbers different from  the five excluded pairs $(0,0)$, $(\pm 1,0)$ 
and $(0,\pm 1)$.
If $\theta$ is not a 6th root of unity then $\alpha = a+b\theta$ is nonzero, and is not a root of unity.
\end{lemma}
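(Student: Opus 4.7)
The plan is to show by case analysis that if $\alpha = a+b\theta$ is either zero or a root of unity, then either $(a,b)$ belongs to the excluded list or $\theta$ is a 6th root of unity. I begin by disposing of the degenerate cases. If $b=0$, then $\alpha=a$ is nonzero iff $a\neq 0$, and is a root of unity only when $a=\pm 1$; both $(0,0)$ and $(\pm 1,0)$ are excluded. If $a=0$ and $b\neq 0$, then $\alpha=b\theta$ has modulus $|b|$, so it is a root of unity only when $b=\pm 1$, again excluded. Finally, when $ab\neq 0$, $\alpha=0$ would force $\theta=-a/b\in\qq$, hence $\theta=\pm 1$, which is a 6th root of unity and contradicts the hypothesis on $\theta$.

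It remains to treat the main case $ab\neq 0$ with $\alpha$ a root of unity. Here I exploit the Galois action: let $n$ be the order of $\theta$, so that $\mathrm{Gal}(\qq(\theta)/\qq)\cong(\zz/n\zz)^*$ acts by $\sigma_k\colon\theta\mapsto\theta^k$. Each conjugate $\sigma_k(\alpha)=a+b\theta^k$ is again a root of unity, hence of complex absolute value $1$ under a fixed embedding $\qq(\theta)\hookrightarrow\mathbb{C}$. Computing
\[
1 = |a+b\theta^k|^2 = (a+b\theta^k)(a+b\theta^{-k}) = a^2+b^2+ab(\theta^k+\theta^{-k}),
\]
I obtain $\theta^k+\theta^{-k}=(1-a^2-b^2)/(ab)$ for \emph{every} $k$ coprime to $n$. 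Since every Galois conjugate of $\theta+\theta^{-1}$ equals this fixed rational number, $\theta+\theta^{-1}$ is fixed by the entire Galois group and therefore lies in $\qq$.

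By Niven's theorem (equivalently, $[\qq(\theta+\theta^{-1})\colon\qq]=\varphi(n)/2$ for $n\geq 3$), the rationality of $2\cos(2\pi/n)$ forces $n\in\{1,2,3,4,6\}$. The values $n\in\{1,2,3,6\}$ all make $\theta$ a 6th root of unity, contradicting the hypothesis. Only $n=4$ remains, giving $\theta=\pm i$; then $\qq(\theta)=\qq(i)$, whose only roots of unity are $\{\pm 1,\pm i\}$, which is incompatible with $\alpha=a\pm bi$ being a root of unity when $ab\neq 0$. This exhausts all cases.

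The crux of the argument is the Galois reduction to the rationality of $\theta+\theta^{-1}$, combined with the classical fact that $2\cos(2\pi/n)\in\qq$ only for $n\in\{1,2,3,4,6\}$; everything else is a short direct check. The five excluded pairs appear exactly as the roots-of-unity possibilities in the degenerate strata $a=0$ or $b=0$, which is what motivates the precise hypothesis of the lemma.
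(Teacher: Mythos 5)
Your proof is correct. The structure — dispose of the $a=0$ or $b=0$ strata, then use the Galois action to show that $\theta$ can have only few conjugates, then eliminate $n=4$ by hand — matches the paper, but the mechanism at the central step is genuinely different. The paper argues geometrically: the conjugates $\theta^k$ all lie on the intersection of the unit circle with the circle $|a+bz|=1$, and since the excluded pairs guarantee these circles are distinct, there are at most $2$ intersection points, forcing $\phi(n)\leq 2$. You instead compute $|a+b\theta^k|^2=1$ to get $\theta^k+\theta^{-k}=(1-a^2-b^2)/(ab)$ for all $k$ coprime to $n$, conclude that $\theta+\theta^{-1}$ is fixed by the Galois group and hence rational, and then invoke Niven's theorem (equivalently $[\qq(\theta+\theta^{-1}):\qq]=\phi(n)/2$) to get $n\in\{1,2,3,4,6\}$. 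These are two faces of the same phenomenon — your ``same real part'' condition is precisely the algebraic shadow of the paper's two-circle intersection — but your formulation avoids the geometric picture and the paper's estimate $\phi(n)\geq\sqrt n$ for $n>6$, replacing both with the clean statement that $2\cos(2\pi/n)\in\qq$ only for $n\in\{1,2,3,4,6\}$. Both arguments terminate identically on the $n=4$ case. Your version is arguably more self-contained for a reader comfortable with Galois theory; the paper's is more visual.
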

\begin{proof}
We will need some properties of cyclotomic polynomials. 
Recall that if $\theta$ is a root of unity of order $n$, its minimal polynomial
is the cyclotomic polynomials $\psi_n$. 
By definition, the conjugates of $\theta$ are the other roots of its minimal 
polynomial.
The roots of $\psi_n$ are exactly
the roots of unity of order $n$. 
There are $\phi(n)$ such roots, where $\phi(n)$ is Euler's totient function.
It is known that $\phi(n) \geq \sqrt{n}$ for $n>6$.
We therefore have $\phi(n) \geq 3$ for $n>6$.
From this it follows that $\phi(n) \geq 3$ except for $n=1,2,3,4$ or $6$.

The conclusion of the lemma clearly holds true in the case $b=0$. We therefore
assume in the remainder of the proof that $b \neq 0$.

The only rational roots of unity are $+1$ and $-1$, which are 6th roots of unity, hence $\alpha \neq 0$.
If  both $\theta$ and $a+b\theta$
happen to be  roots of unity then $\theta$ lies at the
intersection of the unit circle of the complex plane, and of the circle
defined by the condition $|a+bz|=1$. By excluding the 5 excluded pairs,
we have made sure that these two circles are distinct. They have therefore 
at most 2 intersection points. 
If $\theta'$ is a conjugate of $\theta$, the point $a+b\theta'$ is also 
a root of unity and must therefore lie at the intersection of the two circles.
Since there are at most two intersection points, $\theta$ has at most one conjugate.
This happens only when $\theta$ is a root of a cyclotomic polynomial
$\psi_n$ of degree $\phi(n) \leq 2$, and we have seen that there are 
only 5 possible values for $n$.
The two roots of order 4, $\pm i$, can be ruled out since $a\pm bi$ is a 
root of unity only when $(a,b)$ is equal to one of the two excluded pairs 
$(0,\pm 1)$.
We are left with the roots of unity of order 1, 2, 3 or 6, that is, with
the 6th roots of unity.
\qed \end{proof}

\begin{remark}
If $\theta^6=1$, $a+b\theta$ can be a root of unity for appropriate values 
of $a$ and $b$. For instance, if $\theta=e^{i\pi/3}$ then 
$1-\theta=e^{-i\pi/3}$.
If $\theta=e^{2i\pi/3}$ then $1+\theta = e^{i\pi/3}$.
\end{remark}

\begin{remark}
In the remainder of this paper we will apply~(\ref{hlb_equation}) only to 
$p$-th roots of unity where $p$ is prime.
\end{remark}

\section{Hitting Set Construction} \label{hitting}

It is well known that roots of unity yield hitting sets for sparse polynomials.
\begin{lemma} \label{sparse_divisors}
Let $K$ be a field of characteristic 0 and $f \in K[X]$ a nonzero univariate polynomial of degree at most $d$ with at most $m$ nonzero monomials. Then there are less
than $m \log d$ prime numbers $p$ for which $f(X)$ is identically zero 
modulo $X^p-1$.
\end{lemma}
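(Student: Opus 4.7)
The plan is to show that if $X^p-1$ divides $f$, then $p$ must divide many of the pairwise differences among the exponents of $f$. Since these differences are bounded in size by $d$, the product of all such differences is bounded, which in turn bounds the number of primes $p$ that can behave this way.

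Write $f(X)=\sum_{i=1}^m a_i X^{e_i}$ with each $a_i\in K\setminus\{0\}$ and $0\le e_1<\cdots<e_m\le d$, and fix a prime $p$ for which $f\equiv 0\pmod{X^p-1}$. Reducing modulo $X^p-1$ replaces $X^{e_i}$ by $X^{e_i\bmod p}$, so the reduction is $\sum_{c=0}^{p-1}\bigl(\sum_{i:\,e_i\equiv c\,(p)} a_i\bigr)X^c$. For this polynomial of degree less than $p$ to vanish in $K[X]$, each inner sum must be zero. In particular, since the $a_i$ are nonzero, no residue class mod $p$ can contain exactly one exponent; every nonempty class contains at least two of the $e_i$.

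Let $r_p$ be the number of distinct residues of the $e_i$ mod $p$, and let $n_c$ be the size of the class with residue $c$. From $n_c\ge 2$ whenever $n_c\neq 0$ we have $\binom{n_c}{2}\ge n_c-1$, so summing gives $\sum_c\binom{n_c}{2}\ge m-r_p\ge m/2$. Hence $p$ divides at least $m/2$ of the positive differences $e_j-e_i$ ($1\le i<j\le m$), and consequently $p^{m/2}$ divides the positive integer $N=\prod_{i<j}(e_j-e_i)\le d^{\binom{m}{2}}$.

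Letting $P$ denote the set of all primes with $f\equiv 0\pmod{X^p-1}$, and using that contributions at distinct primes are coprime, $\prod_{p\in P}p^{m/2}$ divides $N$. Taking $\log_2$ and using $\log_2 p\ge 1$ for every prime yields $(m/2)\,|P|\le\log_2 N\le\binom{m}{2}\log_2 d$, so $|P|\le(m-1)\log_2 d<m\log d$. The only non-routine step is the combinatorial claim that every occupied residue class contains at least two exponents; this is really just the observation that a nonzero field element times a monomial is a nonzero monomial, but it is the structural fact that makes the counting work. Everything else is bookkeeping.
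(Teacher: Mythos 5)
The paper does not supply a proof of this lemma: it is quoted from \cite{BHLV09}, with a multivariate version in Lemma~5 of \cite{KaKoi06}, so there is no in-paper argument to compare against. Your proof is correct. The key structural observation --- that vanishing of $f$ modulo $X^p-1$ forces every occupied residue class of exponents mod $p$ to contain at least two exponents, since a singleton class would leave a lone nonzero coefficient as a surviving term --- is right, and the count $\sum_c {n_c \choose 2}\ge m-r_p\ge m/2$ is sound (strictly you should carry $\lceil m/2\rceil$ into the exponent of $p$, though using $m/2$ as a lower bound does no harm to what follows). One point worth flagging explicitly: the final step $(m-1)\log_2 d < m\log d$ tacitly takes $\log$ to mean $\log_2$; with the natural logarithm the inequality would already fail for $m\ge 4$, so the base-$2$ convention of the cited source is essential here. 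The proof usually given for this bound is slightly leaner and avoids the pair-counting by targeting a single residue class: after dividing by $X^{e_1}$ (a unit modulo $X^p-1$) one may assume $e_1=0$; then the constant coefficient of $f$ reduced mod $X^p-1$ can vanish only if some $j>1$ has $p\mid e_j$, hence $p\mid\prod_{j=2}^m e_j\le d^{m-1}$, and a positive integer at most $d^{m-1}$ has at most $(m-1)\log_2 d$ distinct prime divisors. Your version distributes the same collision idea over all ${m\choose 2}$ pairs and arrives at exactly the same $(m-1)\log_2 d$ bound; it is a legitimate alternative that buys nothing extra but costs only a short combinatorial detour.
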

Here we restrict to  fields of characteristic 0 but this lemma is 
stated in~\cite{BHLV09} for arbitrary integral domains.
A multivariate version can be found in Lemma~5 of~\cite{KaKoi06}.
Lemma~\ref{sparse_divisors} 
can be immediately restated in the language of hitting sets:
\begin{lemma} \label{sparse_hit}
Let $K$ be a field of characteristic 0, $\cal P$ a set of at least  $m \log d$ prime numbers and $\hit$ the set of all $p$-th roots of unity 
(in the algebraic closure of $K$) for all $p \in \cal P$. 

Then $\hit$ is a hitting set for the set of all polynomials $f \in K[X]$
of degree at most $d$ with at most $m$ nonzero monomials.
\end{lemma}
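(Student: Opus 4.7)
The plan is to deduce Lemma~\ref{sparse_hit} directly from Lemma~\ref{sparse_divisors} by contraposition. The two statements are essentially two views of the same fact, so the argument should be very short.

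First, I would take a nonzero $f \in K[X]$ of degree at most $d$ with at most $m$ nonzero monomials, and suppose for contradiction that $f$ vanishes at every point of $\hit$. Fix an arbitrary prime $p \in \cal P$. The $p$-th roots of unity in $\overline{K}$ are exactly the roots of $X^p - 1$, and because $K$ has characteristic $0$ this polynomial is separable, so it has $p$ distinct roots in $\overline{K}$. The assumption that $f$ vanishes on all of them therefore forces $X^p - 1 \mid f$ in $\overline{K}[X]$, and hence also in $K[X]$, because polynomial division with remainder can be carried out inside $K[X]$ and the remainder, being zero over $\overline{K}$, must already be zero over $K$. Thus $f \equiv 0 \pmod{X^p - 1}$.

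Since this conclusion applies to every $p \in \cal P$, every prime in $\cal P$ is a ``bad'' prime in the sense of Lemma~\ref{sparse_divisors}. That lemma bounds the number of bad primes by strictly less than $m \log d$, which contradicts the standing hypothesis $|\cal P| \geq m \log d$. Consequently there must exist some $p \in \cal P$ and some $p$-th root of unity $\omega \in \hit$ such that $f(\omega) \neq 0$, which is exactly the statement that $\hit$ is a hitting set.

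I do not expect any serious obstacle: the whole content of the lemma already sits inside Lemma~\ref{sparse_divisors}. The only step that deserves explicit care is the passage from ``$f$ vanishes on all $p$-th roots of unity'' to ``$X^p - 1 \mid f$'', which depends on the separability of $X^p - 1$ and hence on the characteristic $0$ assumption; everything else is bookkeeping on top of Lemma~\ref{sparse_divisors}.
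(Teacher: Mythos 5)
Your proof is correct and follows exactly the route the paper intends: the paper gives no separate proof of Lemma~\ref{sparse_hit}, stating only that it ``can be immediately restated'' from Lemma~\ref{sparse_divisors}. You have simply filled in the standard translation between ``$f$ vanishes on all $p$-th roots of unity'' and ``$X^p-1 \mid f$'' (valid since $X^p-1$ is separable in characteristic $0$), which is precisely what the paper leaves implicit.
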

In the next proposition and theorems, the projective height comes into play.
Recall that this notion is defined in Section~\ref{background};
in particular, we explain at the end of that section how to compute 
$H(c)$ when the $c_j$ are rational 
(which is the case in Theorem~\ref{gap_th}).
For a rational tuple, the logarithm of the projective height gives 
a more appropriate notion of ``size'' than the naive bit size.
In the next proposition, we use the projective height for tuples of 
algebraic numbers. Namely, following Lenstra~\cite{Len99a} we define the height
$H(p)$ of a polynomial $p=\sum_{j=0}^t c_j X^j \in \overline{\qq}[X]$ 
as the projective height $H(c)$.
\begin{proposition} \label{lenstra}
Let $p \in \overline{\qq}[X]$ be a polynomial with at most $t+1$ non-zero terms.
Assume that $p$ can be written as the sum of two polynomials $q$ and $r$
where each monomial of $q$ has degree at most $\beta$ and each monomial of $r$
has degree at least $\gamma$.
Let $x \in \overline{\qq}^*$ be a root of $p$ that %
is not a root of unity.
If $\gamma - \beta > \log(t \, H(p))/\log H(x)$
then $x$ is a common root of $q$ and $r$.
\end{proposition}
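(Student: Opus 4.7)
The plan is to argue by contrapositive: assuming $q(x)\neq 0$ (so $r(x)=-q(x)\neq 0$ as well), I will derive $H(x)^{\gamma-\beta}\leq t\,H(p)$, contradicting the hypothesis. The cases $q\equiv 0$ or $r\equiv 0$ give the conclusion immediately from $p(x)=0$, so I may assume both are nonzero; then since $q$ and $r$ have disjoint supports whose union has size at most $t+1$, each of $q,r$ has at most $t$ monomials. Write $r(X)=X^\gamma\tilde r(X)$ with $\tilde r\in\overline{\qq}[X]$. The relation $q(x)=-x^\gamma\tilde r(x)$ means that the pair $(q(x),\tilde r(x))\in\overline{\qq}^{\,2}$ is projectively equivalent to $(x^\gamma,-1)$, so by invariance of projective height under scalar multiplication,
\[
H\bigl((q(x),\tilde r(x))\bigr) \;=\; H\bigl((x^\gamma,-1)\bigr) \;=\; H(x)^\gamma.
\]

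The heart of the proof is to bound this same projective height from above by $t\,H(p)\,H(x)^\beta$. Fix a number field $K$ containing $x$ and the coefficients of $p$, and for each $\nu\in M_K$ I will establish the local inequality
\[
\max\bigl(|q(x)|_\nu,\,|\tilde r(x)|_\nu\bigr) \;\leq\; \epsilon_\nu\,|p|_\nu\,\max(1,|x|_\nu)^\beta,
\]
where $\epsilon_\nu=t$ at archimedean places and $\epsilon_\nu=1$ at non-archimedean ones. The bound on $|q(x)|_\nu$ is the standard polynomial-evaluation estimate for a polynomial with at most $t$ monomials of degree $\leq\beta$. For $|\tilde r(x)|_\nu$ I exploit the gap by choosing between two different expressions for $\tilde r(x)$ according to $|x|_\nu$: when $|x|_\nu\geq 1$, the identity $\tilde r(x)=-q(x)/x^\gamma$ combined with $\beta<\gamma$ yields $|\tilde r(x)|_\nu\leq\epsilon_\nu\,|p|_\nu\,|x|_\nu^{\beta-\gamma}\leq\epsilon_\nu\,|p|_\nu$; when $|x|_\nu<1$, the direct expansion of $\tilde r$ as a polynomial with nonnegative exponents gives $|\tilde r(x)|_\nu\leq\epsilon_\nu\,|\tilde r|_\nu\leq\epsilon_\nu\,|p|_\nu$ since $\max(1,|x|_\nu)=1$. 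In both regimes $|\tilde r(x)|_\nu$ is absorbed into the displayed right-hand side.

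Taking the $d_\nu$-weighted product over all $\nu\in M_K$, the archimedean contribution of $\epsilon_\nu$ produces a factor $t^{[K:\qq]}$ (using $\sum_{\nu\mid\infty}d_\nu=[K:\qq]$); extracting the $[K:\qq]$-th root yields $H\bigl((q(x),\tilde r(x))\bigr)\leq t\,H(p)\,H(x)^\beta$. Combined with the projective identity above, this forces $H(x)^{\gamma-\beta}\leq t\,H(p)$, contradicting the hypothesis. The main obstacle is that any attempt to bound $H(\tilde r(x))$ \emph{in isolation} is doomed, since $\tilde r$ can have degree as large as $\deg r-\gamma$ and the naive evaluation bound would introduce a prohibitive factor $H(x)^{\deg r-\gamma}$; the trick is to package $q(x)$ and $\tilde r(x)$ as a single projective point whose height is \emph{a priori} equal to $H(x)^\gamma$, and then to bound both coordinates simultaneously through the place-by-place dichotomy, so that the gap $\gamma-\beta$ appears with the correct sign.
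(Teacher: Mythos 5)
Your proof is correct, and it reconstructs exactly the standard Lenstra-style height argument: factor $r(X)=X^\gamma\tilde r(X)$, observe that $(q(x),\tilde r(x))$ is projectively $(x^\gamma,-1)$ of height $H(x)^\gamma$, bound both coordinates place-by-place by $\epsilon_\nu\,|p|_\nu\max(1,|x|_\nu)^\beta$ using the gap, and compare. The paper does not prove Proposition~\ref{lenstra} itself but defers to \cite{KaKoi06} and Proposition~2.3 of \cite{Len99a}; your argument is in substance the same as that reference's proof.
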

The proof of Proposition~\ref{lenstra} can be found in~\cite{KaKoi06}.
It is essentially the same as the proof
of Proposition~2.3 of~\cite{Len99a}.

\begin{theorem}[Gap Theorem for Hitting Sets] \label{gap_th}
Let $f \in \qq[X]$ be a polynomial of the form~(\ref{expression}), with $(a,b)$ different from the five excluded pairs of Proposition~\ref{hlb}.
Assume without loss of generality that the sequence $(\beta_j)$ is nondecreasing, and assume also there exists $l$ such that
\begin{equation} \label{gap_bound}
\beta_{l+1} - \beta_l > \log(t(t+1)H(c)) / \log C
\end{equation}
where $C$ is the constant of Proposition~\ref{hlb},
and $H(c)$ is the projective height of the tuple $c=(c_j)$.

Let $\hit$ be a set of roots of unity with $\theta ^6 \neq 1$ 
for all $\theta \in \hit$.

Let $g=\sum_{j=0}^l c_j X^{\alpha_j} (a + b X)^{\beta_j}$
and $h=\sum_{j=l+1}^t c_j X^{\alpha_j} (a + b X)^{\beta_j}$.
If $\hit$ is a hitting set for $g$ and $h$, \hit\ is also a hitting set for 
$f=g+h$.
\end{theorem}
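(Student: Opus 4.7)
The plan is to reduce the theorem to the following pointwise implication: for every $\theta \in \hit$, if $f(\theta)=0$ then both $g(\theta)=0$ and $h(\theta)=0$. Once this is established, the conclusion follows by a short case split. One may assume $f\not\equiv 0$; if $g\equiv 0$ then $f=h$ and the hitting property for $h$ is exactly what one needs, while otherwise the hitting property for $g$ produces some $\theta^{\ast}\in\hit$ with $g(\theta^{\ast})\neq 0$, and the contrapositive of the pointwise implication forces $f(\theta^{\ast})\neq 0$.

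To prove the pointwise implication, fix $\theta\in\hit$ with $f(\theta)=0$ and set $y:=a+b\theta$. First I would rule out $y=0$: if $y=0$ then $\theta=-a/b$ would be a rational root of unity, hence $\pm 1$, contradicting $\theta^{6}\neq 1$. Proposition~\ref{hlb} then gives $H(y)\geq C>1$, so in particular $y$ is not a root of unity. At this point I would introduce the auxiliary univariate polynomial
\[
P(Y) \;:=\; \sum_{j=0}^{t} c_{j}\,\theta^{\alpha_{j}}\,Y^{\beta_{j}} \;\in\; \overline{\qq}[Y],
\]
and split it as $P=Q+R$ with $Q$ collecting the terms $j\leq l$ (degree at most $\beta_{l}$) and $R$ the terms $j>l$ (minimum degree $\beta_{l+1}$). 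By construction $P(y)=f(\theta)=0$, $Q(y)=g(\theta)$ and $R(y)=h(\theta)$, so the task reduces to showing via Proposition~\ref{lenstra} that $y$ is a common root of $Q$ and $R$.

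The main step is verifying the hypothesis of Proposition~\ref{lenstra}, namely $\beta_{l+1}-\beta_{l} > \log(t\cdot H(P))/\log H(y)$. Since $H(y)\geq C$, the key inequality to establish is $H(P)\leq (t+1)\,H(c)$, which is where the $(t+1)$ factor in~(\ref{gap_bound}) comes from. I would argue this place by place on a number field $K$ containing $\theta$: because $\theta$ is a root of unity, $|\theta|_{\nu}=1$ for every $\nu\in M_{K}$. At every non-archimedean $\nu$ the ultrametric inequality yields $|P|_{\nu}\leq |c|_{\nu}$, while at every archimedean $\nu$ the triangle inequality applied to the at most $t+1$ summands $c_{j}\theta^{\alpha_{j}}$ sharing a common exponent gives $|P|_{\nu}\leq (t+1)|c|_{\nu}$. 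Taking the product with the appropriate local degrees and using $\sum_{\nu\mid\infty}d_{\nu}=[K:\qq]$ together with $H_{K}(c)=H(c)^{[K:\qq]}$ for rational $c$ delivers $H(P)\leq (t+1)H(c)$. Combining this with $H(y)\geq C$ and the gap hypothesis~(\ref{gap_bound}) gives Lenstra's inequality, and Proposition~\ref{lenstra} then yields $Q(y)=R(y)=0$, that is, $g(\theta)=h(\theta)=0$.

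The main obstacle is precisely tracking the $(t+1)$ factor in the height bound for $P$; this is the only nontrivial place in the argument and explains the apparently mysterious product $t(t+1)$ in~(\ref{gap_bound}) (one $t$ inherited from Proposition~\ref{lenstra}, one $(t+1)$ from the bookkeeping above). The subtlety is that $H(P)$ need not equal $H(c)$, since several summands $c_{j}\theta^{\alpha_{j}}$ may share a common exponent $\beta_{j}$ and their sum can inflate the archimedean absolute values of the coefficient vector by up to $t+1$. Every other ingredient -- the non-vanishing and non-rootness of $y$, the splitting $P=Q+R$, and the final case analysis for $g$ and $h$ -- is essentially formal.
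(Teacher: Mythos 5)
Your proof is correct and follows essentially the same route as the paper: evaluate at $\theta$, pass to the univariate polynomial $P(Y)=\sum_j c_j\theta^{\alpha_j}Y^{\beta_j}$, bound $H(P)\le (t+1)H(c)$, invoke Proposition~\ref{hlb} to get $H(a+b\theta)\ge C$, and apply Proposition~\ref{lenstra} to split off $Q$ and $R$. The paper simply cites Lemma~3 of~\cite{KaKoi06} for the height bound $H(P)\le(t+1)H(c)$, which you re-derive place by place (correctly, using $|\theta|_\nu=1$, the ultrametric inequality at finite places, and the triangle inequality with at most $t+1$ summands at the archimedean places); and the paper closes by the direct contrapositive ($f$ vanishing on $\hit$ forces $g$ and $h$ to vanish on $\hit$, hence $g=h=0$), while your case split on whether $g\equiv 0$ is an equivalent but slightly roundabout way to reach the same conclusion. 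One small cosmetic point: your separate argument that $y\ne 0$ (via $\theta=-a/b$) implicitly assumes $b\ne 0$; it is also redundant, since $H(y)\ge C>1$ from Proposition~\ref{hlb} already rules out $y=0$ and $y$ being a root of unity in all cases.
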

\begin{proof}
We need to show that $f(\theta)=0$ for all $\theta \in \hit$ implies $f=0$.
If $\theta \in \hit$ is a root of $f$ then $a+b\theta$ 
is a root of the univariate polynomial 
$p(X)=\sum_{j=0}^t c_j \theta^{\alpha_j} X^{\beta_j}$.
The height of $p$ satisfies the inequality $H(p) \leq (t+1)H(c)$.
The factor $t+1$ is due to the fact that each
monomial of $p$ ``comes'' from at most $t+1$ terms of~(\ref{expression});
see~\cite{KaKoi06}, Lemma 3 for a proof.
Since $\theta^6 \neq 1$ we have $H(a+b\theta) \geq C >1$ by Proposition~\ref{hlb}. We can therefore apply Proposition~\ref{lenstra}, and it follows that
$x=a+b\theta$ is a common root of the two univariate polynomials 
$q=\sum_{j=0}^l c_j \theta^{\alpha_j} X^{\beta_j}$ and
$r=\sum_{j=l+1}^t c_j \theta^{\alpha_j} X^{\beta_j}$.
This means exactly that $g(\theta)=h(\theta)=0$.

If these two equalities apply to every $\theta \in \hit$ we have $g=h=0$
since $\hit$ is supposed to be a hitting set for both $g$ and $h$.
Hence $f=g+h=0$.$\Box$
\end{proof}
We are now ready to state our main hitting set theorem.
The bound will depend on 3 parameters:
\begin{itemize}
\item[(i)] the parameter $t$ in~(\ref{expression}).
\item[(ii)] $d$, the maximal value of the $\alpha_j$.
\item[iii)] an upper bound $M$ on the projective height $H(c)$ 
of the tuple $c$.
\end{itemize}
Given $t$, $d$ and $M$ we define 
\begin{equation} \label{delta}
\delta=\log(t(t+1)M) / \log C.
\end{equation}
Notice that this is essentially the gap bound in~(\ref{gap_bound}).
\begin{theorem}[Hitting Set Construction] \label{hit_theorem}
Let $\cal P$ be a set of at least $(t+1)(\delta t +1) \log(d+t\delta)$ 
prime numbers,
with $\delta$ as in~(\ref{delta}) and $p \geq 5$ for all $p \in \cal P$.

Let $\hit$ be the set of all $p$-th roots of unity for all $p \in \cal P$.
Then $\hit$ is a hitting set for the set of polynomials that can be represented
under form~(\ref{expression}) with $\alpha_j \leq d$ for all $j$, 
the rational tuple $c$ of projective height $H(c) \leq M$, 
and $(a,b)$ different 
from the two pairs $(0,\pm 1)$.
\end{theorem}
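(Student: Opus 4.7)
The plan is to combine the Gap Theorem~\ref{gap_th} with the sparse hitting set Lemma~\ref{sparse_hit} via a recursive splitting argument. Because Theorem~\ref{gap_th} requires $\theta^6 \neq 1$, and because for a prime $p \geq 5$ the only $p$-th root of unity that is also a $6$-th root is $\theta = 1$, I would work throughout with the restricted set $\hit' = \hit \setminus \{1\}$; showing that $\hit'$ is already a hitting set yields the conclusion for $\hit \supseteq \hit'$.

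First I would dispose of the residual cases $(a,b) \in \{(0,0), (\pm 1, 0)\}$ (the pairs of Proposition~\ref{hlb} not already excluded by hypothesis): in each of them, every $(a+bX)^{\beta_j}$ is a constant, so $f$ reduces to a polynomial with at most $t+1$ nonzero monomials of degree at most $d$, and Lemma~\ref{sparse_hit} applies directly. In the remaining situation $(a,b)$ lies outside all five excluded pairs of Proposition~\ref{hlb} and the Gap Theorem is available. I would sort the $\beta_j$'s in nondecreasing order; while some gap $\beta_{l+1} - \beta_l$ exceeds $\delta$, I would split $f$ into $g$ and $h$ at that gap and recurse on each piece. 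This iteration terminates after at most $t$ splittings and produces pieces $\pi = \sum_j c_j X^{\alpha_j}(a+bX)^{\beta_j}$ all of whose exponents $\beta_j$ lie in a common interval $[\beta^*, \beta^* + t\delta]$.

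For each such piece I would extract the common power, writing $\pi = (a+bX)^{\beta^*}\tilde\pi$ with $\tilde\pi = \sum_j c_j X^{\alpha_j}(a+bX)^{\beta_j - \beta^*}$. Once expanded, $\tilde\pi$ has degree at most $d + t\delta$ and at most $(t+1)(t\delta+1)$ nonzero monomials, precisely the combination of parameters behind the bound $|\mathcal{P}| \geq (t+1)(t\delta+1)\log(d+t\delta)$. A mild variant of Lemma~\ref{sparse_divisors}, obtained by applying it to $\tilde\pi\cdot(X-1)$ so as to bound the number of primes $p$ with $\Phi_p \mid \tilde\pi$ (rather than the stronger $X^p-1 \mid \tilde\pi$), then furnishes some $\theta \in \hit'$ with $\tilde\pi(\theta) \neq 0$. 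For this $\theta$ one has $a+b\theta \neq 0$: a rational zero of $a+bX$ can only be a root of unity if it equals $\pm 1$, whereas $\theta \neq 1$ by construction of $\hit'$ and $\theta \neq -1$ because $2 \notin \mathcal{P}$. Hence $\pi(\theta) \neq 0$, and the Gap Theorem recursion lifts this to the statement that $\hit'$ hits $f$.

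The hard part will be the interaction between the factor $(a+bX)^{\beta^*}$ and roots of unity. Applying the sparse hitting set lemma directly to $\pi$ is not viable because the exponent $\beta^*$ is \emph{not} a parameter of the theorem and can be arbitrarily large; one is forced to divide $\pi$ by $(a+bX)^{\beta^*}$ to obtain something genuinely sparse, which opens the risk that the $\theta$ produced by the sparse lemma happens to annihilate $a+bX$. The conjunction of the Gap Theorem's $\theta^6 \neq 1$ requirement (which already forces us to drop $\theta = 1$) and the hypothesis $p \geq 5$ (which rules out $\theta = -1$) exactly neutralizes this obstruction, since the only rational roots of unity are $\pm 1$.
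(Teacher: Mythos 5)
Your proposal follows the same overall strategy as the paper's proof: recursively split $f$ at the gaps via Theorem~\ref{gap_th}, and handle each gap-free piece by factoring out the common power of $a+bX$ and applying the sparse hitting set lemma to the expanded remainder. Where you differ is in the explicit treatment of $\theta=1$, and this is actually a point the paper glosses over. Since $1$ is a $p$-th root of unity for every prime $p$, retaining it in $\hit$ would violate the hypothesis $\theta^6\neq 1$ of Theorem~\ref{gap_th} and would also falsify the paper's assertion that ``the elements of $\hit$ are irrational numbers'' (used to justify dividing out $(a+bX)^{\beta_0}$); yet Lemma~\ref{sparse_hit}, via Lemma~\ref{sparse_divisors}, needs vanishing at the full set of $p$-th roots, including $1$, to conclude $X^p-1\mid f$. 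Your fix --- working with $\hit'=\hit\setminus\{1\}$ and applying Lemma~\ref{sparse_divisors} to $\tilde\pi\cdot(X-1)$ so as to control the primes with $\Phi_p\mid\tilde\pi$ --- is the right one, and your observation that $p\geq 5$ rules out $\theta=-1$ so that $a+b\theta\neq 0$ on $\hit'$ is exactly what makes the factorization legitimate. The one thing to flag is quantitative: multiplying by $(X-1)$ roughly doubles the monomial count, so your argument needs $|\cal P|$ on the order of $2(t+1)(\delta t+1)\log(d+t\delta+1)$ rather than the stated $(t+1)(\delta t+1)\log(d+t\delta)$. So as written you prove a slightly weaker bound; but since the paper's own proof leaves the $\theta=1$ issue unaddressed, the stated constant is itself in need of this small adjustment, and the discrepancy reflects a careful reading rather than an error in your argument.
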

\begin{proof}
We proceed by reduction to Lemma~\ref{sparse_hit}.
As in Theorem~\ref{gap_th}, we will assume without loss of generality that the sequence $(\beta_j)$ is nondecreasing.
We can of course assume that $(a,b) \neq (0,0)$ 
since the corresponding polynomial in~(\ref{expression}) would be identically 
zero. We will also assume that $(a,b) \neq (\pm 1,0)$. In that case, 
$f$ can be written as a sum of $t+1$ monomials of degree at most $d$ 
and we can apply Lemma~\ref{sparse_hit}: $\hit$ is a hitting set for $f$
since $|{\cal P}| \geq (t+1)\log d$ 
(the same argument could of course 
be applied to any pair $(a,b)$ with $b=0$).

The remainder of the proof is divided in two cases. We first consider the case where there is no gap in $f$ in the sense of Theorem~\ref{gap_th}, that is,
$\beta_{l+1} - \beta_l \leq \delta$ for all $l$. 
In this case, factoring out the polynomial $(a+bX)^{\beta_0}$ if necessary, 
we assume without  loss of generality that $\beta_0=0$.
This is legitimate since the nonzero polynomial $(a+bX)^{\beta_0}$ does
not vanish at any point of $\hit$ (recall that the elements of $\hit$ are irrational numbers). From the relations $\beta_0=0$ 
and $\beta_{l+1} - \beta_l \leq \delta$ we find that 
$\beta_t = \max _l \beta_l \leq \delta t$.
Expanding each factor $(a+bX)^{\beta_j}$ in~(\ref{expression}) as a sum of monomials, we see that $f$ can be written as a sum of at most 
$(t+1)(\delta t +1)$ monomials,
each of degree at most $d+t\delta$. Lemma~\ref{sparse_hit} 
therefore implies that $\hit$ is a hitting set for $f$.

We finally consider the case where there are gaps in $f$. By ``breaking $f$ at the gaps'', we write $f=\sum_{i=1}^s f_i$ where each $f_i$ 
is a sum of consecutive terms $c_j X^{\alpha_j}(a+bX)^{\beta_j}$ 
from~(\ref{expression}). More precisely,
 we make sure that there is no gap inside
each $f_i$ in the sense that the difference between two consecutive exponents
$\beta_j$ in $f_i$ is bounded by $\delta$, and there is a gap between $f_i$ and $f_{i+1}$ in the sense that the difference between the smallest exponent 
$\beta_j$ in $f_{i+1}$ and the biggest one in $f_i$ is greater than $\delta$.

We have seen that $\hit$ is a hitting set for each of the $f_i$.
Applying Theorem~\ref{sparse_hit} repeatedly ($s-1$ times), we see that $\hit$ is a hitting
set for $f$ as well.$\Box$
\end{proof}

\begin{remark}
The pair $(a,b)=(0,\pm 1)$ is excluded from Theorem~\ref{hit_theorem}.
This case can easily be handled with Lemma~\ref{sparse_hit}: 
$f$ is a sum of $t+1$ monomials of degree at most $d+d'$, where $d'=\max_j \beta_j$. We can therefore replace the set $\cal P$ in Theorem~\ref{hit_theorem} by a set of prime numbers of cardinality at least $(t+1) \log (d+d')$.
By contrast, the bound in Theorem~\ref{hit_theorem} does not depend on~$d'$.
Also, we can construct a single hitting set which covers uniformly the
two cases $(a,b) \neq (0,\pm 1)$ and $(a,b) = (0,\pm 1)$ by replacing the bound $(t+1)(\delta t +1) \log(d+t\delta)$ in Theorem~\ref{hit_theorem} by the maximum of this bound and $(t+1) \log (d+d')$.
\end{remark}

\section{Lower Bound Theorem} \label{lb_section}

As explained in Section~\ref{mainideas}, it is straightforward 
to obtain a lower from our hitting set construction.
\begin{theorem}[Main Lower Bound] \label{main}
Let $\cal P$ be a set of prime numbers with $p \geq 5$ for all $p \in \cal P$,
\begin{equation} 
|{\cal P}| \geq (t+1)\max(\log (d+d'),(\delta t +1) \log(d+t\delta))
\end{equation}
and $\delta$ as in~(\ref{delta}).
The polynomial
$$P=\prod_{i \in {\cal P}} (X^{p_i}-1)$$
cannot be expressed under form~(\ref{expression}) if $\alpha_j \leq d$ and
$\beta_j \leq d'$ for all $j$, 
and if the rational tuple $c$ is of projective height $H(c) \leq M$.
The same lower bound applies to all nonzero multiples of $P$.
\end{theorem}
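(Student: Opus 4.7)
The plan is to apply Theorem~\ref{hit_theorem} directly, taking as hitting set the union $\hit$ of all $p$-th roots of unity for $p \in \cal P$, and then to invoke the general argument outlined in Section~\ref{mainideas}: any polynomial vanishing on a hitting set for a class $\cal F$ and yet nonzero cannot belong to $\cal F$.

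First I would check that the cardinality hypothesis on $\cal P$ matches what is needed. The bound $(t+1)(\delta t+1)\log(d+t\delta)$ is exactly the requirement of Theorem~\ref{hit_theorem} for pairs $(a,b) \neq (0,\pm 1)$, while the bound $(t+1)\log(d+d')$ covers the two pairs $(0,\pm 1)$ via the remark following Theorem~\ref{hit_theorem}. Taking the maximum of the two, as in the hypothesis, ensures that $\hit$ is a hitting set for the entire class $\cal F$ of polynomials representable under form~(\ref{expression}) with $\alpha_j \leq d$, $\beta_j \leq d'$, and $H(c) \leq M$, uniformly over all pairs $(a,b) \in \qq^2$. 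The condition $p \geq 5$ guarantees that no element of $\hit$ is a $6$-th root of unity, as required by Theorem~\ref{gap_th} through Theorem~\ref{hit_theorem}.

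Next, I would observe that $P(\theta) = 0$ for every $\theta \in \hit$: indeed, each such $\theta$ is a $p$-th root of unity for some $p \in \cal P$, so the factor $X^p - 1$ (and hence the whole product $P$) vanishes at $\theta$. Since $P$ is a product of nonzero polynomials it is itself nonzero in $\qq[X]$. If $P$ admitted a representation of the form~(\ref{expression}) respecting the stated bounds, then $P$ would be a nonzero element of $\cal F$ vanishing at every point of $\hit$, contradicting the defining property of a hitting set. For a nonzero multiple $Q = PR$ the same reasoning applies: $Q$ still vanishes on $\hit$, and $Q \neq 0$ since $\qq[X]$ is an integral domain, so $Q \notin \cal F$ by exactly the same argument.

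The only real work is already done in Theorem~\ref{hit_theorem}; the present theorem is essentially its translation into a lower-bound statement, so I do not expect any genuine obstacle beyond carefully matching the quantitative hypotheses on $|\cal P|$ with the two cases handled by Theorem~\ref{hit_theorem} and the subsequent remark.
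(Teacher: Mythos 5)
Your proof is correct and follows essentially the same route as the paper: apply Theorem~\ref{hit_theorem} together with the remark that extends it to $(a,b)=(0,\pm 1)$, note that $P$ (and any nonzero multiple) vanishes on the constructed hitting set $\hit$, and conclude $P$ cannot lie in the class. The only cosmetic difference is that the paper phrases the contradiction in terms of the set of roots of $Q$ being a hitting set for $f$, whereas you argue directly with $\hit$; the substance is identical.
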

\begin{proof}
Let $f$ be a polynomial which can be expressed under form~(\ref{expression})
with $\alpha_j \leq d$ and $\beta_j \leq d'$ for all $j$, and $H(c) \leq M$.
Let $Q$ be a multiple of~$P$. By Theorem~\ref{hit_theorem} 
and the remark following it, the set of roots of $Q$ is a hitting set for $f$.
Hence we cannot have $f=Q$, unless $Q=0$.$\Box$
\end{proof}
Theorem~\ref{simple_lb} follows from Theorem~\ref{main} since
there are $\Omega(N/\log N)$ prime numbers in the interval $[2,N]$.

\section{Further Remarks} \label{remarks}

One can try to extend our results in various ways.
One possible direction is prove lower bounds for other polynomials than polynomials of the form $\prod_i (X^i-1)$.
It was recently shown in~\cite{Aven09} that nonzero polynomials represented under form~(\ref{expression}) have at most $6t-4$ real roots.
As a result, any set of $6t-3$ real numbers is a hitting set and we have lower bounds for polynomials with many real roots such as $\prod_{i=1}^{2^n}(X-i)$.

Perhaps more importantly, one can look for lower bounds under more general representations than~(\ref{expression}).
We make two suggestions below.
\begin{enumerate}

\item Consider expressions of the form 
\begin{equation} \label{change}
\sum_{j=0}^t c_j (a+bX)^{\alpha_j} (c+dX)^{\beta_j}.
\end{equation}
Assuming that $b \neq 0$, 
the change of variable $Y=a+bX$
brings us back to~(\ref{expression}) and we can use the black-box  algorithm of the present paper or the non-black-box algorithm of~\cite{KaKoi05} to perform deterministic identity testing.
Unfortunately, the change of variable $Y=a+bX$ is non-black-box and as a result
we do not have a lower bound for polynomials of the form $\prod_i (X^i-1)$.
Nevertheless, the set of real numbers is invariant under this change of variable. As a result, it follows again from~\cite{Aven09} that any set of $6t-3$ real numbers is a hitting set for~(\ref{change}) and we still have exponential lower bounds 
for polynomials such as $\prod_{i=1}^{2^n}(X-i)$.

The case $b=0$ is even simpler: now we  have polynomials
of the form 
$$\sum_{j=0}^t c'_j (c+dX)^{\beta_j},$$
where $c'_j = c_j a^{\alpha_j}$.
The change of variable $Y=c+dX$ shows that by Descarte's rule of signs, 
such a polynomial can have at most $2t+1$ real roots if it is nonzero. We can therefore construct a hitting set (any set of $2t+2$ real numbers will do)
and derive good lower bounds.

\item Consider now expressions of the form 
$$\sum_{j=0}^t c_j X^{\alpha_j} (a_j + b_j X)^{\beta_j}.$$
In~(\ref{expression}) we have $a_j = a$ and $b_j = b$ for all $j$.
Is deterministic identity testing feasible, either in a black-box 
or non-black-box way~? Is it possible to derive lower bounds for this form 
of polynomial representation~?
\end{enumerate}

\small

\section*{Acknowledgments} This work was to a large extent triggered by a question of Erich Kaltofen: can the polynomial $(X^n-1)/(X-1)$ be represented efficiently under form~(\ref{expression}) ?

\bibliographystyle{plain}

\end{document}